\newtheorem{theorem}{Theorem}
\newtheorem{proposition}[theorem]{Proposition}
\theoremstyle{definition}
\newtheorem{definition}{Definition}
\title{\LARGE \bf Safety Embedded Control of Nonlinear Systems via Barrier States*}
\author{Hassan Almubarak$^{1,4}$, Nader Sadegh$^{2}$ and Evangelos A. Theodorou$^{3}$ 
\thanks{*This research was supported by the NSF CPS grant no. 1932288.} 
\thanks{$^{1}$School of Electrical and Computer Engineering} 
\thanks{$^{2}$The George W. Woodruff School of Mechanical Engineering}
\thanks{$^{3}$The Daniel Guggenheim School of Aerospace Engineering}
\thanks{Georgia Institute of Technology, Atlanta, GA 30332, USA}
\thanks{$^{4}$ Department of Control and Instrumentation Engineering, King Fahd University of Petroleum \& Minerals, Dhahran 31261, Saudi Arabia}
\thanks{{\tt\footnotesize halmubarak, sadegh, evangelos.theodorou@gatech.edu}}

     }   
\begin{document}

\maketitle
\thispagestyle{empty}
\pagestyle{empty}

\begin{abstract}
 In many safety-critical control systems, possibly opposing safety restrictions and control performance objectives arise. To confront such a conflict, this letter proposes a novel methodology that embeds safety into stability of control systems. The development enforces safety by means of barrier functions used in optimization through the construction of \textit{barrier states} (BaS) which are \textit{embedded} in the control system's model. As a result, as long as the equilibrium point of interest of the closed loop system is asymptotically stable, the generated trajectories are guaranteed to be safe. Consequently, a conflict between control objectives and safety constraints is substantially avoided. To show the efficacy of the proposed technique, we employ barrier states with the simple pole placement method to design safe linear controls. Nonlinear optimal control is subsequently employed to fulfill safety, stability and performance objectives by solving the associated Hamilton-Jacobi-Bellman (HJB) which minimizes a cost functional that can involve the BaS. Following this further, we exploit optimal control with barrier states on an unstable, constrained second dimensional pendulum on a cart model that is desired to avoid low velocities regions where the system may exhibit some controllability loss and on two mobile robots to safely arrive to opposite targets with an obstacle on the way.
\end{abstract}
\section{Introduction}
Control theory has been a central element in today's fast growing, interdisciplinary technologies from simple decision making problems to terrifically complex autonomous systems. Undoubtedly, advancements in technologies are faced with unprecedented challenges. One vital challenge is safety. 
Nonetheless, conflicting safety restrictions and control objectives potentially appear in safety-critical control systems. The main goal of this letter is to develop a control design methodology that satisfies safety constraints and evades the possible conflict between safety constraints and control objectives. The objective is achieved by embedding the safety constraints in the system's model through barrier states (BaS) which are provided to the control law used to attain performance objectives. 

For a dynamical system, safety is classically verified by invariance of the set of permitted states. The set of permitted states is (forward) invariant if at some point in time it contains the system's state then it contains the state for all (future) times \cite{blanchini1999set}. To formally prove invariance and verify safety, barrier certificates were introduced in the control literature  \cite{prajna2004safety} (for a brief historical overview, see \cite{ames2019CBF-Theo&App}).
Extending the idea to control dynamical systems, a set is said to be controlled invariant, also called viable, if for any initial condition in the set, the associated trajectory is forced to be in the set for all future times using a proper control. Influenced by barrier certificates and control Lyapunov functions (CLFs), control barrier functions (CBFs) were introduced in \cite{wieland2007constructive}, which were further developed in \cite{romdlony2014uniting,ames2014control,romdlony2016stabilization,ames2016CBF-forSaferyCritControl}. CBFs can be looked at as state-dependent hard input constraints which are commonly used in quadratic programming (QP) fashion to produce a safe controller \cite{ames2014control,ames2016CBF-forSaferyCritControl,ames2016CBF-forSaferyCritControl}. CBFs are becoming increasingly popular in multi-objective control due to their ability of rendering sets invariant and flexible unification with CLFs. 
Nonetheless, to avoid conflicts in the CLF-CBF QP, one of the conditions needs to be relaxed. For a complete review on CBFs, the reader may refer to \cite{ames2019CBF-Theo&App}.

In this letter, inspired by adopting barrier functions to establish inequality constraints forming CBFs, we utilize barrier functions to construct \textit{barrier states}. Those barrier states (BaS) create barriers in the state space forcing the search of a stabilizing feedback control law to the set of safe controls. Specifically, the barrier states are appended to the model of the safety-critical dynamical system to generate a system that is safe if the equilibrium point of interest is asymptotically stable. 
Therefore, since safety and stability have been unified, designing a stabilizing controller for the new model means a stabilizing safe control for the original dynamical system. In other words, the safety property is embedded in the stability of the system as the feedback controller is a function of the system's states and the barrier states, hence the name \textit{safety embedded control}. The proposed method is general enough to be used with any valid barrier function. For standard barrier functions, such as Log or inverse, we show that it is possible to generate smooth or even analytic state equations for barrier states if desired. Unlike control barrier functions, no explicit knowledge of the relative degree of the system with respect to the output describing the safe region is required.

The letter is organized as follows. Section \ref{Sec: problem statement} presents the problem statement. In Section \ref{Sec: Barrier States for Safe Stabilization}, utilizing barrier functions, we develop barrier states used to embed safety in the stabilization problem. Subsequently, we implement the developed concept to design simple safe linear controls in Section \ref{Safety Embedded Linear Control}. In Section \ref{Sec: EMBEDDED SAFETY OPTIMAL CONTROL}, we employ barrier states in the context of constrained optimal control to meet safety and performance objectives through minimizing an infinite horizon cost functional. In Section \ref{Sec: NUMERICAL IMPLEMENTATION}, we show the effectiveness of the proposed technique through safely stabilizing an unstable pendulum on a cart model where it is desired to avoid low velocity regions near the $90^{\text{o}}$ angle. Moreover, a multi-robot example is used where two simple mobile robots are to go to specific locations safely, i.e. without colliding while avoiding obstacles on the way. Lastly, conclusion remarks are driven in Section \ref{Sec: CONCLUSIONS}.

\section{Problem Statement} \label{Sec: problem statement}
Consider the nonlinear control-affine dynamical system
\begin{equation} \label{Dynamical System}
    \dot{x}= f(x)+ g(x) u
\end{equation}
where $x \in \mathcal{D} \subset \mathbb{R}^n$,\ $u \in \mathcal{U} \subset \mathbb{R}^m$, $f: \mathbb{R}^n \rightarrow \mathbb{R}^n$ and $g: \mathbb{R}^n  \rightarrow \mathbb{R}^{n\times m}$ are continuously differentiable and $f(0)=0$, without loss of generality. We wish to formulate a continuous stabilizing feedback controller $u=K(x)$ that renders the 
nonempty open subset $\mathcal{C}$ of $\mathcal{D}$ defined as $\mathcal{C}=\{x\in \mathcal{D}: h(x)> 0\}$ controlled invariant, where $h : \mathcal{D} \rightarrow \mathbb{R}$ is a continuously differentiable scalar valued function that represents the safe set. The set $\mathcal{D}$ represents the domain of operation which will be further prescribed later in the letter. The set $\mathcal{C}$, referred to as the safe set, is said to be forward invariant with respect to the closed-loop system $\dot{x}=f(x)+g(x)K(x)$ if $x(0)\in \mathcal{C}$ implies $x(t) \in \mathcal{C}$, $\forall t \geq 0$. 
\begin{definition} \label{Def: safety and safe condition}
The continuous feedback controller $u=K(x)$ is said to be safe if $\mathcal{C}$ is forward invariant with respect to the resulting closed-loop system $\dot{x}=f(x)+g(x)K(x)$. That is, 
\begin{equation} \label{safety constraint}
    h(x(t))>0 \ \forall t \geq 0 ;\ x(0) \in \mathcal{C}
\end{equation}
We refer to this as the safety condition.
\end{definition}
Enforcement of the safety constraint may be facilitated by means of a smooth scalar valued function $B: \mathcal{C}\rightarrow \mathbb{R}$ known as a {\em barrier function} (BF) in the optimization literature. Popular barrier functions include the inverse barrier function (a.k.a. Carroll barrier) $B(\eta)=1/\eta$ and the logarithmic barrier function $B(\eta)=\log (\frac{1+\eta}{\eta})$. The main properties of those types of barrier functions are that they blow up at the boundaries of the complement of $\mathcal{C}$, i.e. $\lim_{\eta \rightarrow 0} B(\eta)=\infty$, $\lim_{\eta \rightarrow \infty} B(\eta)=0$ and $\inf_{\eta \in \mathbb{R}^+} B(\eta) \geq 0$. Another choice that has the advantage of being analytic with a known power series expansion is $B(\eta)=\tanh^{-1} (e^{-\eta})$. The composite BF corresponding to $h$ that defines $\mathcal{C}$ is defined to be
$\beta(x):=B \circ h (x)$. Note that $\beta(x)\rightarrow \infty$ if and only if $h(x)\rightarrow 0$. The following Proposition follows from Definition \ref{Def: safety and safe condition} and the choice of BFs:
\begin{proposition}
For the control system in \eqref{Dynamical System}, the feedback controller $u=K(x)$ satisfies the safety condition \eqref{safety constraint} if and only if $\beta (x(0)) < \infty$ implies $\beta(x(t))<\infty \ \forall t >0$.
\label{safety prop 1}
\end{proposition}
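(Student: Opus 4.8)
The plan is to reduce the proposition to an elementary correspondence, valid along any trajectory, between finiteness of the composite barrier function and positivity of $h$: I will show that $\beta(x(t))$ stays finite exactly when $h(x(t))$ stays positive, and then the biconditional in the statement follows by unwinding Definition~\ref{Def: safety and safe condition} against this correspondence.

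First I would establish the pointwise link between $\beta$ and the safe set $\mathcal{C}$. Since $B$ is real-valued on $\mathbb{R}^+$, whenever $h(x)>0$ the value $\beta(x)=B(h(x))$ is a finite real number, which gives one implication for free. For the converse I would invoke the defining divergence of the barrier function, $\beta(x)\to\infty$ iff $h(x)\to 0$ (i.e. $\lim_{\eta\to 0^+}B(\eta)=\infty$), together with continuity in $t$ of the closed-loop solution $x(t)$. Suppose $x(0)\in\mathcal{C}$, so $h(x(0))>0$, but $h(x(t_1))\le 0$ for some $t_1>0$. Because $t\mapsto h(x(t))$ is continuous (the closed-loop vector field $f+gK$ is continuous, so $x(\cdot)$ is continuous in $t$), the intermediate value theorem yields a first crossing time $t_0\in(0,t_1]$ with $h(x(t_0))=0$; as $t\uparrow t_0$ we have $h(x(t))\to 0^+$ and hence $\beta(x(t))\to\infty$. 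Thus $\beta$ cannot remain finite on $[0,t_0]$. This shows that, along a trajectory originating in $\mathcal{C}$, finiteness of $\beta$ is \emph{equivalent} to positivity of $h$.

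With this equivalence in hand, both directions are immediate. For the ``only if'' direction, assume the controller is safe, so $x(0)\in\mathcal{C}$ implies $h(x(t))>0$ for all $t\ge 0$. If $\beta(x(0))<\infty$ then $h(x(0))>0$, i.e. $x(0)\in\mathcal{C}$; safety gives $h(x(t))>0$ for all $t$, and the correspondence upgrades this to $\beta(x(t))<\infty$ for all $t>0$. For the ``if'' direction, assume $\beta(x(0))<\infty$ implies $\beta(x(t))<\infty$ for all $t>0$. Given any $x(0)\in\mathcal{C}$ we have $h(x(0))>0$, hence $\beta(x(0))<\infty$; the hypothesis then forces $\beta(x(t))<\infty$, whence $h(x(t))>0$ for all $t$, which is exactly the safety condition \eqref{safety constraint}.

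I expect the only genuine subtlety — as opposed to the bookkeeping of the two implications — to be the converse half of the pointwise correspondence, namely ruling out a trajectory that remains finite-valued under $\beta$ yet has already left $\mathcal{C}$. This is precisely where continuity of the solution and the barrier's divergence at the boundary must be combined, so that the trajectory cannot ``jump'' across $h=0$ without $\beta$ first diverging. Everything else is a direct translation between the predicates $h(x)>0$ and $\beta(x)<\infty$.
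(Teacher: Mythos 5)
Your proof is correct and follows the route the paper itself intends: the paper states Proposition~\ref{safety prop 1} without a written proof, asserting it ``follows from Definition~\ref{Def: safety and safe condition} and the choice of BFs,'' and your argument is exactly that translation between the predicates $h(x)>0$ and $\beta(x)<\infty$, using $\lim_{\eta\to 0^+}B(\eta)=\infty$ together with continuity of the closed-loop trajectory to rule out leaving $\mathcal{C}$ while $\beta$ stays finite. Your write-up simply supplies the intermediate-value-theorem detail the authors left implicit.
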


\section{Barrier States for Safety Embedded Stabilization} \label{Sec: Barrier States for Safe Stabilization}
As possibly conflicting safety constraints and control performance objectives need to be avoided in safety-critical control, current multi-objective frameworks, e.g. CBF-CLF safe stabilization framework, avoid possible conflicts by relaxing one of the constraints to ensure feasibility of the solution. This may result in an undesirable performance. To untangle such a problem, we propose a provable safe control technique that satisfies the safety constraints and the performance objectives simultaneously with no relaxation. 

For a barrier function $\beta(x)=B(h(x))$ where $h(x)$ defines the safe set $\mathcal{C}$, the idea is to augment the open loop system with a new state variable $z$ that is related to the recentered barrier function \cite{wills2004barrier}, $\beta(x) - \beta(0)$ to ensure that $z=0$ is an equilibrium state. 
If we simply set $z$ to $\beta(x)-\beta(0)$, the resulting state equation would be
\begin{align*}
    \dot{\beta}(x) &=B'(h(x))(L_fh(x)+L_gh(x)u) \\
    & =  \phi_0(\beta(x)) (L_fh(x)+L_gh(x)u)
\end{align*}
where $\phi_0=B'\circ B^{-1}$ and $B^{-1}$ is the inverse of the BF. To safely stabilize the origin of \eqref{Dynamical System}, however, we need to ensure stabilizability of the origin of the augmented system. The main issue with augmenting $\dot{\beta}$ as the state equation is that $\beta(x)$ becomes a redundant state and the resulting augmented system may not be 
stabilizable in some cases. Fortunately, this issue can be resolved by perturbing the barrier state equation through an auxiliary function $\phi_1$ that ensures stabilizability of the origin of the augmented system without affecting the safety guarantees. More specifically, we modify the state equation for $z$ according to  
\begin{equation} \label{zdot-barrier state}
    \dot{z} = \phi_0 (z+\beta_0) \dot{h}(x) -\gamma \phi_1(z+\beta_0,h(x))
\end{equation}
where $\beta_0 = \beta(0)$, $\dot{h}(x)=L_fh(x)+L_gh(x)u$, $\gamma \in \mathbb{R}^+$ and $\phi_1(\zeta,\eta)$ is an analytic function of two variables satisfying 
$$ \phi_1(\beta(x),h(x))=0 \ \text{and} \ \frac{\partial \phi_1}{\partial \zeta}(\beta_0,h(0)) > 0 $$
It can be seen that for all three BFs mentioned earlier, the function $\phi_0:=B'\circ B^{-1}$ is analytic with no singularities. Moreover, it is worth noting that both $\phi_0$ and $\phi_1$ are formulated independently of the system's model based on the barrier function and $h(x)$. For instance, if $\zeta = B(\eta)=1/\eta$, then $\phi_0(\zeta)=B'(B^{-1}(\zeta))=-\zeta^2$. Furthermore, letting $\phi_1(\zeta,\eta)=\zeta (\eta \zeta-1)$ satisfies the first condition $\phi_1(\zeta,\eta)=0$ along $\zeta = \beta(x)$ and $\eta=h(x)$ since $\beta(x)h(x)=1$ and the second condition
$
\frac{\partial \phi_1}{\partial \zeta}=2 \zeta \eta-1|_{\beta_0,h(0)}=2 \beta_0h(0)-1 =1>0
$.
Table \ref{tab:phi} provides the explicit expressions for $\phi_0$ and possible $\phi_1$ for most commonly used barrier functions. It should be noted that the positive scalar $\gamma$ is a design parameter that adjusts the \textit{rate} at which $z$ returns to $\beta(x)-\beta_0$ if it deviates from it at any instant of time. As a consequence, $\gamma$ has some influence on the design of the safe feedback control gains as we will demonstrate in the simulation examples.  
\begin{table}[]
    \centering
    \begin{tabular}{||c|c|c||}
    \hline\hline
        $\zeta$=B($\eta$) & $\phi_0(\zeta)$ & $\phi_1(\zeta,\eta)$ \\
        \hline
        $\log (\frac{1+\eta}{\eta})$  & $-4\sinh^2(\zeta/2)$ & $\eta(e^\zeta-1)^2-e^\zeta+1$\\
        \hline
        $1/\eta$ & $-\zeta^2$ & $\eta \zeta^2-\zeta$\\
        \hline
        $2\tanh^{-1} (e^{-\eta})$ & $-\sinh(\zeta)$ & $\tanh(\zeta/2)-e^{-\eta}$ \\
        \hline\hline
    \end{tabular}
    \caption{Functions $\phi_0$ and $\phi_1$ for three barrier functions.}
    \label{tab:phi}
        \vspace{-1cm}
\end{table}
The following proposition shows that if $z(0)$ is defined properly, then $z(t)=\beta(x)-\beta_0$ and $\phi_1(z+\beta_0,h(x))=0$, $\forall t\geq 0$, implying that boundedness of $z$ guarantees satisfaction of the safety constraint.
\begin{proposition}
Suppose that $z(0)=\beta(x(0))-\beta(0)$ and $\beta(x(0))<\infty$. Then, the auxiliary state variable $z(t)$ generated by the perturbed state equation in \eqref{zdot-barrier state} along the trajectories of \eqref{Dynamical System} is bounded if and only if $\beta(x(t))$ is bounded $\forall t$.
\label{safety prop 2}
\end{proposition}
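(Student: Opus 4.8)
The plan is to prove the equivalence by first establishing the much stronger \emph{pointwise} identity $z(t)=\beta(x(t))-\beta_0$ for every $t$ in the interval of existence; the bounded-iff-bounded claim then follows immediately because $\beta_0$ is a finite constant. First I would record the chain rule satisfied by the barrier itself. Since $\phi_0=B'\circ B^{-1}$ and $\beta(x)=B(h(x))$, we have $\phi_0(\beta(x))=B'(B^{-1}(B(h(x))))=B'(h(x))$, whence
\begin{equation*}
\dot{\beta}(x)=\frac{d}{dt}B(h(x))=B'(h(x))\,\dot{h}(x)=\phi_0(\beta(x))\,\dot{h}(x).
\end{equation*}
In words, the unperturbed barrier dynamics are exactly $\phi_0(\beta)\dot{h}$, matching the first term of \eqref{zdot-barrier state}.

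Next I would verify that the candidate $z^{\ast}(t):=\beta(x(t))-\beta_0$ is an \emph{exact} solution of \eqref{zdot-barrier state} along the given trajectory $x(t)$ of \eqref{Dynamical System}. Because $z^{\ast}+\beta_0=\beta(x)$, the identity above yields $\dot{z}^{\ast}=\phi_0(z^{\ast}+\beta_0)\,\dot{h}(x)$, reproducing the first term; and the perturbation term vanishes identically, since the defining property of $\phi_1$ is $\phi_1(\beta(x),h(x))=0$, so that $\phi_1(z^{\ast}+\beta_0,h(x))=\phi_1(\beta(x),h(x))=0$. Hence $z^{\ast}$ satisfies \eqref{zdot-barrier state} with no residual, and by hypothesis $z^{\ast}(0)=\beta(x(0))-\beta_0=z(0)$.

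The third step is a uniqueness argument. Viewing \eqref{zdot-barrier state} as a scalar initial value problem in $z$ driven by the prescribed signals $h(x(t))$ and $\dot{h}(x(t))$, its right-hand side is locally Lipschitz in $z$: the functions $\phi_0,\phi_1$ are analytic (Table \ref{tab:phi}) and $f,g,h$ are continuously differentiable, so the driving signals are continuous in $t$ as long as $x(t)\in\mathcal{C}$. By Picard--Lindel\"of the solution is unique, and since $z$ and $z^{\ast}$ share the same initial datum they coincide, giving $z(t)=\beta(x(t))-\beta_0$ on the maximal interval of existence. Consequently $z(t)$ is bounded if and only if $\beta(x(t))$ is bounded, which is the assertion; a blow-up of $\beta$ as $x$ approaches $\partial\mathcal{C}$ corresponds exactly to a blow-up of $z$.

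I expect the only real obstacle to be the regularity and uniqueness bookkeeping rather than the algebra, namely ensuring the driving terms stay well defined, which requires $x(t)$ to remain in $\mathcal{C}$ so that $h(x(t))>0$ and $\beta(x(t))<\infty$. This is consistent with phrasing the result as an equivalence of boundedness, since escape from $\mathcal{C}$ is precisely where both $\beta$ and $z$ diverge. Note that the stabilizability-oriented conditions on $\phi_1$ (namely $\tfrac{\partial\phi_1}{\partial\zeta}(\beta_0,h(0))>0$ and the rate $\gamma$) play no role here; only the vanishing of $\phi_1$ on the manifold $z=\beta(x)-\beta_0$ is needed. Together with Proposition \ref{safety prop 1}, this identity is what allows boundedness of the barrier state to certify the safety condition \eqref{safety constraint}.
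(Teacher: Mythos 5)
Your proof is correct and takes essentially the same route as the paper: both establish the pointwise identity $z(t)=\beta(x(t))-\beta_0$ by exploiting that $\phi_1$ vanishes and $\phi_0(z+\beta_0)\dot h$ matches $\dot\beta$ on the manifold $z=\beta(x)-\beta_0$, the paper phrasing this as the error variable $\tilde z=z+\beta(0)-\beta(x)$ having an equilibrium at zero. Your explicit appeal to Picard--Lindel\"of merely makes precise the uniqueness step that the paper leaves implicit.
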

\begin{proof}
We prove the Proposition by establishing that 
\begin{equation} z(t)=\beta(x(t))-\beta(0), \ \forall t \geq 0\end{equation}
Subtracting $\dot{\beta}(x) =\phi_0(\beta(x))\dot{h}(x)$ from both sides of \eqref{zdot-barrier state}, we have
$
\dot{\tilde{z}} (t) = \tilde{\phi}_0 (\tilde{z},x)\dot{h}(x) -\gamma \phi_1(\tilde{z}+\beta(x),h(x))
$
where $\tilde{z}=z+\beta(0)-\beta(x)$ and 
$
\tilde{\phi}_0 (\tilde{z},x)=\phi_0(\tilde{z}+\beta(x))-\phi_0(\beta(x) )
$.
The preceding differential equation has an equilibrium point at $\tilde{z}=0$ since $\phi_1(\beta(x),h(x))=0$ and $\tilde{\phi}_0 (0,x)=0$. Thus $\tilde{z}(t)=0$, or equivalently $z(t)=\beta(x(t))-\beta(0)$, $\forall t \geq 0$, provided that $\tilde{z}(0)=z(0)+\beta(0)-\beta(x(0))=0$ as assumed by the hypothesis.
\end{proof}
If desired, multiple constraints can be combined to form one BF, as done in the optimization literature, to create a single BaS. Creating a single BaS is favorable in many applications and helps avoiding adding many nonlinear state equations to the safety embedded model which may increase the complexity of the controller. A single BaS that represents different constraints, however, may be highly nonlinear and may be more difficult to use to design a safety embedded control. Additionally, some flexibility on choosing design parameters for each constraint, such as penalization of the barrier states in the case of optimal control, will be lost since we will have only one barrier state. It is important to mention that in some applications, one could represent multiple constraints with one function $h(x)$ defining the safety set and then use a single BaS.

For $q$ constraints, let $\beta(x)=\sum_{i=1}^{q} B(h_i(x))$. Then,
$$
\dot{\beta}(x)=\sum_{i=1}^{q} B'\circ B^{-1} \Big(z+\beta_0 - \sum_{j=1, j\neq i}^q B\big( h_j(x) \big) \Big) \dot{h}_i
$$
and therefore the BaS can be found to be
\begin{align}\begin{split} \label{multi-BaS}
    \dot{z}=\sum_{i=1}^{q} & \Bigg[   \phi_0 \Big(  z+\beta_0 - \sum_{j=1, j\neq i}^q B\big( h_j(x) \big) \Big) \dot{h}_i  \\
    &  -\gamma_i \phi_1 \Big( z+\beta_0 - \sum_{j=1, j\neq i}^q B\big( h_j(x) \big) , h_i \Big) \Bigg]
\end{split} \end{align}
In the simulation examples, we use a single BaS to represent multiple constraints for one problem and we use multiple ones in the other to validate the proposed technique.

Now, we are in a position to create the safety embedded model.  By defining $z=[z_1, \dots, z_q]^{\rm{T}}$, if more than one BaS is used, and augmenting it to the system \eqref{Dynamical System}, we get
\begin{equation} \begin{split}\label{eq:Augmented_Representation} 
    & \dot{x}= f(x)+ g(x) u \\ 
    & \dot{z}= f_b(x,z)+ g_b(x,z) u
\end{split}\end{equation}
where $f_{b}(x,z)$ and $g_{b}(x,z)$ are defined according to \eqref{zdot-barrier state} or \eqref{multi-BaS}.
By the definition of $\phi_0$ and the restrictions imposed on $\phi_1$, it follows that both $f_b$ and $g_b$ are analytic if $f$, $g$, and $h$ are analytic, and the subsystem $\dot{z}= f_b(x,z)+ g_b(x,z) u$ is stabilizable at the origin. Hence, the combined system, which can be more compactly described as
\begin{equation} \begin{split} \label{new system, safety augmented}
    & \dot{\bar{x}}= \bar{f}(\bar{x})+ \bar{g}(\bar{x}) u \\ 
\end{split} \end{equation}
where $\bar{x}=\begin{bmatrix} x \\ z \end{bmatrix}, \bar{f}=\begin{bmatrix} f \\ f_b\end{bmatrix}$ with $\bar{f}(0)=0$ and $\bar{g}=\begin{bmatrix} g \\ g_b\end{bmatrix}$, preserves the continuous differentiability and stabilizability of the original control system \eqref{Dynamical System}. Therefore, the safety constraint is \textit{embedded} in the closed-loop system's dynamics and stabilizing the safety embedded system \eqref{new system, safety augmented} implies enforcing safety for the safety-critical system \eqref{Dynamical System}, i.e. forward invariance of the safe set $\mathcal{C}$ with respect to \eqref{Dynamical System}. 
\begin{theorem} \label{safe if stable theorem}
Suppose there exists a continuous feedback controller $u=K(\bar{x})$ such that the origin of the safety embedded closed-loop system, $\dot{\bar{x}}=\bar{f}(\bar{x})+\bar{g}(\bar{x}) K(\bar{x})$, is asymptotically stable. Then, there exists an open neighborhood $\mathcal{D}$ of the origin such that $u=K(\bar{x})$ is safe with respect to the safety region $\mathcal{C}=\{x\in \mathcal{D}: h(x)> 0\}$. 
\end{theorem}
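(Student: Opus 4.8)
The plan is to chain the local boundedness furnished by asymptotic stability through the barrier-state identity of Proposition~\ref{safety prop 2} and the blow-up property of the barrier function. Concretely, I would show that for initial conditions in a suitably small neighborhood of the origin the augmented trajectory $\bar{x}(t)$ remains in a bounded, forward-invariant set; this forces the barrier state $z(t)$ to stay bounded, which (since $z(t)=\beta(x(t))-\beta_0$) keeps $\beta(x(t))$ finite, and hence keeps $h(x(t))>0$ for all $t$, i.e.\ inside $\mathcal{C}$.

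First I would extract a forward-invariant neighborhood from the stability hypothesis. Since the origin of $\dot{\bar{x}}=\bar{f}(\bar{x})+\bar{g}(\bar{x})K(\bar{x})$ is asymptotically stable, a converse Lyapunov argument (or the $\varepsilon$-$\delta$ characterization of stability) supplies a continuous Lyapunov function $V$ on a neighborhood of the origin whose open sublevel sets $\Omega_c=\{\bar{x}:V(\bar{x})<c\}$ are, for small $c>0$, bounded and forward invariant. Every trajectory with $\bar{x}(0)\in\Omega_c$ then stays in $\Omega_c$ and is bounded; in particular the barrier component $z(t)$ is bounded for all $t\ge 0$.

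Next I would transfer boundedness back to the original state and build $\mathcal{D}$. Initializing the barrier state consistently, $z(0)=\beta(x(0))-\beta_0$ (admissible because $\beta(x(0))<\infty$ whenever $x(0)\in\mathcal{C}$), Proposition~\ref{safety prop 2} gives $z(t)=\beta(x(t))-\beta_0$ for all $t$, so boundedness of $z(t)$ is exactly boundedness of $\beta(x(t))$; and because $\beta=B\circ h$ blows up precisely as $h\to 0^+$, a bounded $\beta(x(t))$ forces $h(x(t))>0$ for all $t$, which by Proposition~\ref{safety prop 1} is the safety condition. For the neighborhood, I would use $h(0)>0$ (so $\beta_0<\infty$), which makes $x\mapsto[x;\beta(x)-\beta_0]$ continuous at the origin and sending $0$ to $0$; the preimage of $\Omega_c$ under this map, intersected with $\{h>0\}$, is an open neighborhood $\mathcal{D}$ of the origin, for which $\mathcal{C}=\{x\in\mathcal{D}:h(x)>0\}$ and every $x(0)\in\mathcal{C}$ yields $\bar{x}(0)\in\Omega_c$.

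I expect the main obstacle to be the bookkeeping between the neighborhood in the augmented $\bar{x}$-space, where the stability estimate lives, and the neighborhood $\mathcal{D}$ in the original $x$-space, where safety is asserted. The delicate point is that forward invariance of $\mathcal{C}=\{x\in\mathcal{D}:h(x)>0\}$ demands not merely $h(x(t))>0$ but also $x(t)\in\mathcal{D}$; working with an open sublevel set $\Omega_c$ (rather than the plain stability balls $B_\delta\subset B_\varepsilon$, whose radii differ) resolves this, since $\bar{x}(t)\in\Omega_c$ certifies $h(x(t))>0$ and $x(t)\in\mathcal{D}$ simultaneously. The two barrier-function steps are immediate from the propositions already established, so the genuine work is this coordination of neighborhoods together with the standard but (given that $K$ is only continuous) technically nontrivial extraction of a forward-invariant sublevel set.
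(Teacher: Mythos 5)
Your proposal is correct and follows essentially the same route as the paper: both arguments extract boundedness of the augmented trajectory $\bar{x}(t)$ from asymptotic stability, define $\mathcal{D}$ as the preimage of a bounded neighborhood of the origin under the continuous map $x\mapsto\beta(x)-\beta_0$ intersected with a neighborhood in the $x$-coordinates, and then chain Proposition~\ref{safety prop 2} (so that bounded $z$ means bounded $\beta(x(t))$) with Proposition~\ref{safety prop 1} to conclude $h(x(t))>0$. The only divergence is technical: the paper obtains boundedness by placing a bounded product neighborhood $\mathcal{X}\times\mathcal{Z}$ inside the domain of attraction rather than by constructing a forward-invariant Lyapunov sublevel set, which suffices because Definition~\ref{Def: safety and safe condition} identifies safety with the condition $h(x(t))>0$ alone, so the extra coordination you perform to keep $x(t)\in\mathcal{D}$ for all time is a legitimate strengthening but not required by the paper's notion of forward invariance.
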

\begin{proof}
Let us assume that the origin of the embedded closed-loop system is asymptotically stable with a domain of attraction $\mathcal{A}$. Then, there exist open neighborhoods $\mathcal{X}\subset \mathbb{R}^n$ and $\mathcal{Z}\subset \mathbb{R}^q$ of the origin with $\mathcal{Z}$ bounded such that $\mathcal{X} \times \mathcal{Z} \subset \mathcal{A}$. Letting $\beta:=[\beta_1 \; \cdots \beta_q]^{\rm T}$, by the continuity of $\tilde{\beta}(x)=\beta(x)-\beta(0)$ on $\{x \in \mathbb{R}^n: h(x)>0\}$, the inverse image $\tilde{\beta}^{-1} (\mathcal{Z})$ of $\mathcal{Z}$ by $\tilde{\beta}$ is an open neighborhood of the origin. Thus $\mathcal{D}:= \mathcal{X} \cap \beta^{-1} (\mathcal{Z})$ is also an open neighborhood of the origin. For any initial condition $x(0) \in \mathcal{C} \subset \mathcal{D}$, it can be seen that $z(0)=\tilde{\beta}(x(0))\in \mathcal{Z}$ and $\bar{x}(0) \in \mathcal{A}$. Thus the trajectories $\bar{x}(t)$, $t\geq 0$, are bounded and converge to zero implying that $z(t)$ is bounded as well. By Propositions \ref{safety prop 1} and \ref{safety prop 2}, $\beta(x)$ is also bounded guaranteeing the safety of $u=K(\bar{x})$.
\end{proof}
The remainder of the letter is devoted to synthesizing safe feedback controllers that simultaneously stabilize \eqref{Dynamical System} and satisfy the required safety constraint \eqref{safety constraint} by stabilizing the dynamical system \eqref{new system, safety augmented} which unifies the two requirements. In the next sections, we apply the proposed methodology to constrained linear control systems to generate safe linear controls using pole placement and then to constrained optimal control problems to synthesize optimal safe controllers.

\section{Safety Embedded Linear Control} \label{Safety Embedded Linear Control} 
Consider the linear time-invariant system $\dot{x}=A x + B u$
subject to some safety constraint defined by a smooth function $h$. Defining the associated barrier state, augmenting it and linearizing the safety embedded model \eqref{new system, safety augmented} around the origin yields the linearized safety embedded system $\dot{\bar{x}}=\bar{A} \bar{x} + \bar{B} u$ where
\begin{gather*}
    \bar{A}=\begin{bmatrix} A & 0_{n \times 1} \\ -\gamma \phi_{1x}\big(\beta_0,h_x(0)\big) + \phi_0(\beta_0) h_x(0) A& -\gamma \phi_{1z}\big(\beta_0,h(0)\big) \end{bmatrix}  \\ 
    \bar{B}=\begin{bmatrix} B \\ \phi_0(\beta_0) h_x(0) B \end{bmatrix}
 \end{gather*}
It should be noted that this system may not be controllable. This should not pose any problem, however, as $\phi_1$ guarantees stabilizability which should be enough for us to design a safe stabilizing controller. Although we may not be able to change the location of the barrier state's pole, it is sufficient to use the states and the barrier state to construct a safe stabilizing control. 
Next, we show an example where we form a barrier state to design a safe stabilizing control while the linearized system is stabilizable but not controllable.
\subsection{Constrained Linear System Numerical Example}
Consider the open loop unstable linear system given by
\begin{equation*}
    \begin{bmatrix} \dot{x}_1 \\ \dot{x}_2 \end{bmatrix}= \begin{bmatrix} 1 & -5 \\ 0 & -1\end{bmatrix} \begin{bmatrix} x_1 \\ x_2 \end{bmatrix} + \begin{bmatrix} 0 \\ 1\end{bmatrix} u
\end{equation*}
Assume that it is desired to stay in the safe set $\mathcal{C}=\{x: (x_1 - 2)^2 + (x_2 - 2)^2 - 0.5^2 >0 \}$ and that the closed-loop system's poles are $-3$ and $-5$. Using the inverse barrier function, the linearized safety embedded system that we will use to design the linear control will be 
$$
\dot{\bar{x}}=  \begin{bmatrix} 1 & -5 & 0\\ 0 & -1 & 0 \\ \frac{4\gamma+4}{7.75^2} & \frac{4\gamma-24}{7.75^2} & -\gamma \end{bmatrix}  \bar{x} +   \begin{bmatrix} 0 \\ 1 \\ \frac{4}{7.75^2} \end{bmatrix} u
$$
where $\bar{x}=[x_1 \ \ x_2 \ \ z]^{\rm{T}}$. We use this augmented linear system to design a safe linear controller using the pole placement method to place the poles of the closed-loop controllable subsystem at $-3$ and $-5$, achieving the desired performance. When $\gamma=2$, the safe stabilizing linear control is found to be $u= -4.43 x_1 +8.38 x_2 -5.63 z$. Note that although the controller is linear with respect to the safety embedded system, it is a nonlinear function of the original state $x$ provided $z(0) = \beta(x(0)) - \beta(0)$. Fig. \ref{linear system phase portrait (top) and different gammas (bottom)} shows that indeed the designed linear controller is able to safely stabilize the system. 

\begin{figure}
        \centering
    \subfloat{\includegraphics[trim=0 0 0 0, clip, width=0.75\linewidth]{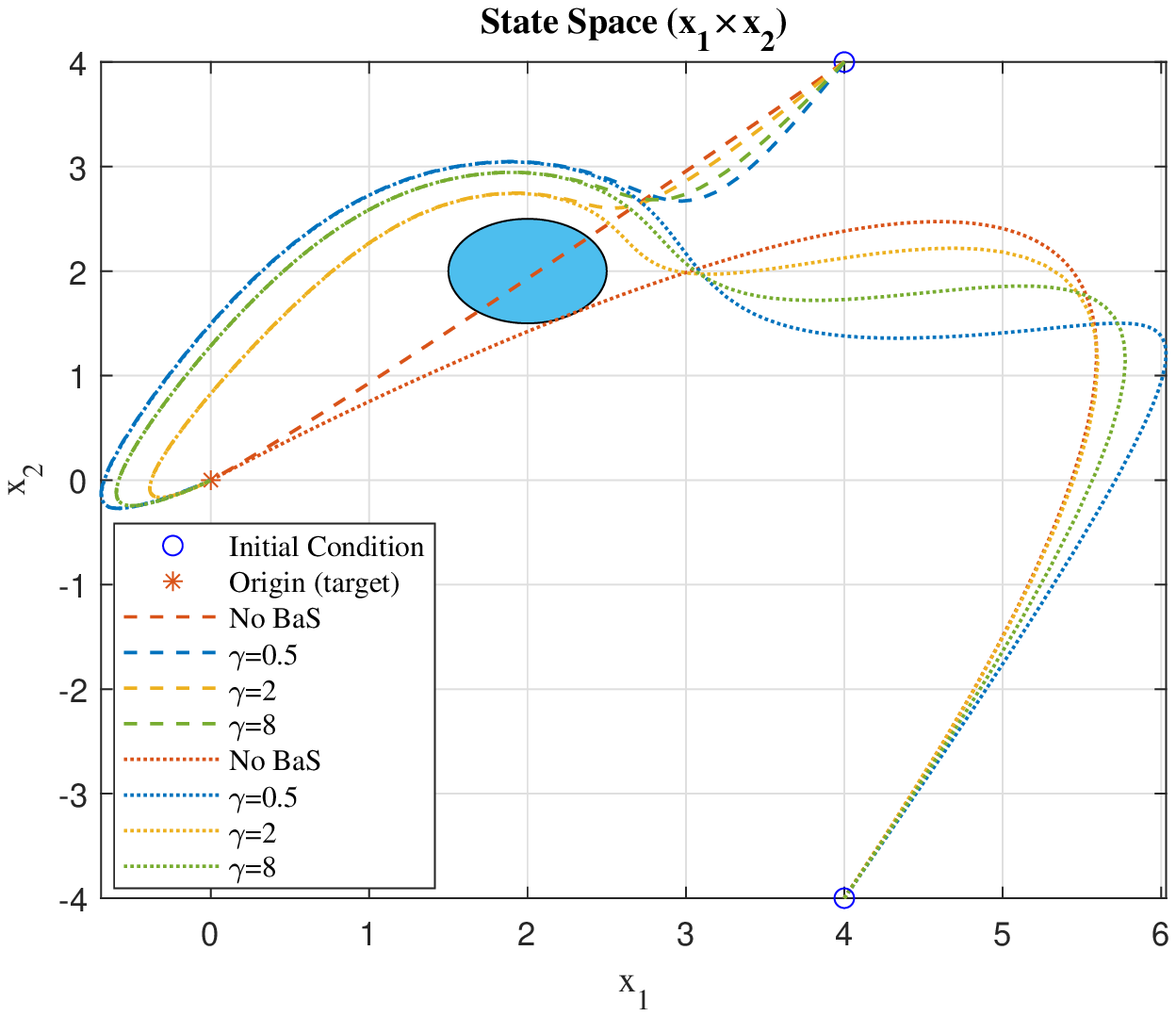}}
    \qquad
    \subfloat{\includegraphics[trim=0 3 0 0, clip, width=0.75\linewidth]{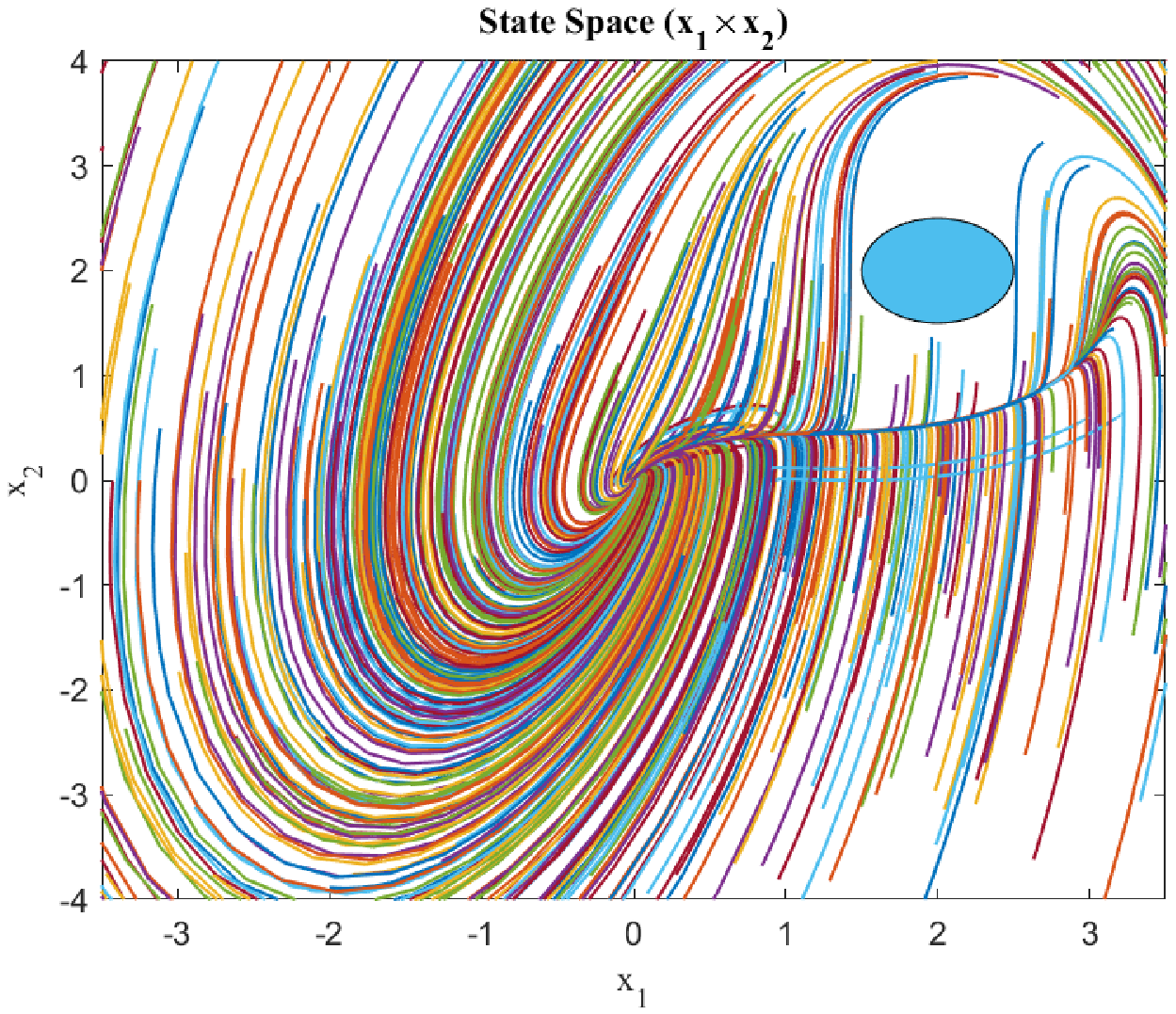}}      
   \caption{The top figure shows numerical simulations of the closed loop system starting from different initial conditions (small circles) with different $\gamma$s using the pole placement method to place the poles at $(-3,-5)$. The bottom figure shows a phase portrait of the closed loop system under the safety embedded linear control $u= -22.63 x_1 +7.14x_2 +102.96 z$ with $\gamma=1$.}
      \label{linear system phase portrait (top) and different gammas (bottom)}
\end{figure} 
   
It is worth noting that this is a linear control design for an inherently nonlinear control problem and thus limitations and difficulties of linear controls to stabilize nonlinear systems apply. One could use any nonlinear control technique to design a safely stabilizing control. In the next section, we leverage the infinite horizon optimal control in the process of designing an efficacious optimal safe control as optimal control is a well suited paradigm for such a problem and it facilitates the design of nonlinear controllers.

\section{Safety Embedded Optimal Control} \label{Sec: EMBEDDED SAFETY OPTIMAL CONTROL}
Consider the infinite horizon optimal control problem of minimizing some cost functional subject to the dynamics \eqref{Dynamical System} and the safety condition \eqref{safety constraint}. This has been a sought-after problem recently \cite{chen2020optimal,cohen2020approximate,almubarak2021HJBbased} where CBFs are used to solve constrained optimal control problems. In these efforts, it can be seen how difficult the problem is and thus various complex techniques have been developed to solve the problem. Using the proposed BaS, this problem can be solved directly using well-known unconstrained optimal control methods. A systematic approach toward solving this problem is to seek an optimal feedback controller $u=K(\bar{x})$ that minimizes
\begin{equation} \label{cost.func}
V(x(0),u(t))=\frac{1}{2} \int_{0}^{\infty} Q(\bar{x})+u^{\rm{T}} R u \; dt 
\end{equation}
where $Q:\mathbb{R}^{n+q} \rightarrow \mathbb{R}^{+} \ \forall x \neq 0$ is analytic and its Hessian is positive definite and $R \succ 0$, subject to \eqref{new system, safety augmented}. By Theorem \ref{safe if stable theorem}, if this optimal control problem is successfully solved, then both safety and stability requirements are met. For such an infinite horizon optimal control problem, a necessary and sufficient condition is that the well-known Hamilton-Jacobi-Belmman (HJB) equation is satisfied, 
\begin{equation} \label{HJB}
   \text{HJB}:= \min_{u} \ V_{\bar{x}}^*\big(\bar{f}(\bar{x})+\bar{g}(\bar{x}) u\big) + \frac{1}{2}u^{\rm{T}} R u + \frac{1}{2} Q(\bar{x}) =0
\end{equation}
with a boundary condition $V^* (0)=0$ where $V^*$ is the optimal solution, a.k.a. the value function, and $V_{\bar{x}}^* =\frac{\partial V^*}{\partial \bar{x}}$. 
\begin{theorem}
Consider the optimal control problem \eqref{new system, safety augmented}-\eqref{cost.func} with analytic $f(x), g(x)$ and $h(x)$ and suppose that
the pair $\big(\frac{\partial f}{\partial x}(0),g(0)\big)$ is stabilizable, $Q$ is analytic with positive definite Hessian and $R \succ 0$. Then, there exists a unique analytic value function $V^*(\bar{x})$ satisfying the {\rm HJB} equation \eqref{HJB}, which yields an optimal safe feedback control 
\begin{equation} \label{safe optimal control}
    u^*_{safe}(\bar{x})=- R^{-1} \bar{g}(\bar{x}) V^*_{\bar{x}} (\bar{x})
\end{equation}
Moreover, $V^*(\bar{x})$ is a Lyapunov function and $u^*_{safe}$ renders the origin of the closed loop system $\bar{f}(\bar{x})+\bar{g}(\bar{x})u^*_{safe}(\bar{x})$ asymptotically stable. Therefore, the barrier state $z$ is bounded guaranteeing the generation of safe trajectories. 
\end{theorem}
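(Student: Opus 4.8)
The plan is to recognize this as an instance of the classical Al'brekht--Lukes theorem on analytic solutions of the HJB equation for infinite-horizon regulators, and to verify that the safety embedded system \eqref{new system, safety augmented} together with the cost \eqref{cost.func} satisfies its hypotheses. First I would linearize the open-loop dynamics about the origin, writing $\bar A = \frac{\partial \bar f}{\partial \bar x}(0)$ and $\bar B = \bar g(0)$, and set $\bar Q = \frac{\partial^2 Q}{\partial \bar x^2}(0)$, which is positive definite by hypothesis. The crucial preliminary step is to show the pair $(\bar A,\bar B)$ is stabilizable. Because the barrier-state block of $\bar A$ is lower block triangular with diagonal entry $-\gamma\,\phi_{1z}(\beta_0,h(0))$, and since $\gamma>0$ while $\frac{\partial\phi_1}{\partial\zeta}(\beta_0,h(0))>0$ by construction, the barrier mode is itself asymptotically stable; applying the Hautus/PBH test, any eigenvalue with nonnegative real part must belong to the $x$-block, where stabilizability of $\big(\frac{\partial f}{\partial x}(0),g(0)\big)$ supplies the required rank, and the triangular structure propagates it to the full augmented pair. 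Positive definiteness of $\bar Q$ also makes $(\bar A,\bar Q^{1/2})$ observable, so the associated algebraic Riccati equation admits a unique positive definite stabilizing solution $P$.

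With the hypotheses in place, I would construct $V^*$ as a convergent power series $V^*(\bar x)=\tfrac12\bar x^{\rm T}P\bar x+\sum_{k\geq 3}V^{[k]}(\bar x)$. Substituting into \eqref{HJB} and matching homogeneous terms of increasing degree yields, at degree two, precisely the Riccati equation solved in the previous step, and at each degree $k\geq 3$ a \emph{linear} equation for $V^{[k]}$ driven by the already-determined lower-order terms, whose homogeneous operator is governed by the closed-loop matrix $A_{cl}=\bar A-\bar B R^{-1}\bar B^{\rm T}P$. Since $A_{cl}$ is Hurwitz, sums of any $k$ of its eigenvalues stay in the open left half plane and hence never vanish, so each such equation is uniquely solvable; this produces a unique formal analytic series with $V^*(0)=0$ and $V^*_{\bar x}(0)=0$. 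The minimizer of the bracket in \eqref{HJB} is found by first-order stationarity in $u$, which gives the feedback \eqref{safe optimal control}; because the Hessian in $u$ is $R\succ 0$, this stationary point is the unique minimizer, and analyticity of $V^*$ makes $u^*_{safe}$ analytic as well.

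For the stability claim, I would use $V^*$ directly as a Lyapunov candidate. Near the origin $V^*(\bar x)=\tfrac12\bar x^{\rm T}P\bar x+O(\|\bar x\|^3)$ with $P\succ 0$, so $V^*$ is locally positive definite. Evaluating the HJB identity along the closed loop gives $\dot V^*=V^*_{\bar x}\big(\bar f+\bar g\,u^*_{safe}\big)=-\tfrac12 (u^*_{safe})^{\rm T}R\,u^*_{safe}-\tfrac12 Q(\bar x)$, which is negative definite since $Q(\bar x)>0$ for $\bar x\neq 0$ and $R\succ 0$. By Lyapunov's theorem the origin of $\dot{\bar x}=\bar f(\bar x)+\bar g(\bar x)u^*_{safe}(\bar x)$ is asymptotically stable. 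Finally, invoking Theorem \ref{safe if stable theorem} with this stabilizing controller yields an open neighborhood $\mathcal D$ on which $u^*_{safe}$ is safe; equivalently, by Propositions \ref{safety prop 1} and \ref{safety prop 2}, boundedness of the convergent state $z$ guarantees $\beta(x)$ stays finite and the trajectories remain in $\mathcal C$.

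The main obstacle is not the algebra of the recursion but its \emph{convergence}: showing that the formally constructed series for $V^*$ has a positive radius of convergence, i.e. that $V^*$ is genuinely analytic rather than merely a formal power series. This is the technical heart of Al'brekht's argument and is established by a majorant-series estimate controlling the growth of the coefficients $V^{[k]}$; I would cite that result rather than reproduce the estimate, and note that analyticity of $\bar f,\bar g$ (inherited from analyticity of $f,g,h$ and of $\phi_0,\phi_1$) together with that of $Q$ is exactly what is needed to invoke it.
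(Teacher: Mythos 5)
Your proposal is correct and follows essentially the same route as the paper's proof, which simply observes that the embedded system preserves analyticity and stabilizability and then cites the Lukes/Al'brekht power-series theory for existence and uniqueness of the analytic $V^*$, Lyapunov theory for asymptotic stability, and Theorem \ref{safe if stable theorem} for safety. The only difference is that you usefully fill in details the paper delegates to its references---in particular the PBH/triangular-structure verification that the augmented pair $(\bar{A},\bar{B})$ inherits stabilizability, and the explicit computation $\dot{V}^*=-\tfrac{1}{2}(u^*_{safe})^{\rm T}Ru^*_{safe}-\tfrac{1}{2}Q(\bar{x})$.
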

\begin{proof}
As remarked earlier in the letter, the embedded system \eqref{new system, safety augmented} preserves the continuous differentiability and stabilizability properties of \eqref{Dynamical System} and thus satisfies the analyticity and stabilizability assumptions in \cite{AlmubarakSadeghandTaylor2019,SadeghAlmubarak_recursive,lukes1969optimal} hence guaranteeing the existence and uniqueness of the value function $V^*(\bar{x})$ and the corresponding optimal controller $u^*_{safe}(\bar{x})$ described in \eqref{safe optimal control}. Furthermore, the origin of the resulting closed loop system is asymptotically stable by Lyapunov stability theory \cite{SadeghAlmubarak_recursive, khalil2002nonlinear} and by Theorem \ref{safe if stable theorem}, $u^*_{safe}(\bar{x})$ is safe which completes the proof.
\end{proof}
Various techniques have been proposed in the literature to approximate the solution to the HJB equation or the associated optimal control \cite{al1961optimal,lukes1969optimal,beard1998approximate,cimen2008state,sakamoto2008analytical,AlmubarakSadeghandTaylor2019,SadeghAlmubarak_recursive}. In the next section, we utilize these efforts to produce a power series solution of the value function and its gradient to produce the optimal safe control \eqref{safe optimal control} for the optimal control problem \eqref{cost.func}. Specifically, we mainly utilize the recursive analytic solution proposed in \cite{SadeghAlmubarak_recursive} and the nonlinear quadratic regulator (NLQR) in \cite{AlmubarakSadeghandTaylor2019}. 

\section{Numerical Implementation and Examples} \label{Sec: NUMERICAL IMPLEMENTATION}
To demonstrate the efficacy of the presented technique to produce a safe and stabilizing control, we use a single BaS to enforce safety for a model of an inverted pendulum on a cart, and multiple barrier states for a multi-mobile robot navigation task where two point-robots are asked to go to intersecting targets while avoiding collision and avoiding some unsafe region.
\subsection{Second Order Inverted Pendulum on a Cart}
This system is an unstable version with a state dependent input matrix of the mechanical system with unity parameters used in \cite{romdlony2016stabilization} to show the applicability of the Control Lyapunov–Barrier Function (CLBF) approach. The system is given by
\begin{equation*}\begin{split}
    & \dot{x}_1= x_2 \\
    & \dot{x}_2= \sin(x_1)-0.5(\tanh(10x_2)+x_2)+\cos(x_1) u
\end{split} \end{equation*}
It is desired to avoid low velocities when the angle is half way through to stabilize the pendulum at the upright position to steer clear of the $90^{\text{o}}$ angle where the system loses controllability due to the $\cos(x_1)$ term in the input matrix $g(x)$. That is, there are two symmetrical decoupled unsafe sets and therefore the safe set can be represented by two functions $h_1$ and $h_2$ such that $\mathcal{C}=\{x \in \mathcal{D} \ |\ (x_1 - 2)^2 + x_2^2 > 1 \ \cap \ (x_1 + 2)^2 + x_2^2 > 1\}$. We pick the Carroll BF with $\gamma=5$ and use equation \eqref{multi-BaS} with $z(0)=\beta(x(0))-\beta(0)$ to generate a single BaS.
To generate the optimal safe controller, it is chosen to minimize the functional \eqref{cost.func}, with $R=1, \ Q= 1 x_1^2 + 50 x_2^2 + 0.5 z$.
It is worth mentioning that this is an optimal control problem and thus different performances can be achieved using different cost functionals. 

Fig. \ref{pendolum simulations} shows numerical simulations for the closed-loop system under a 
$3^{\text{rd}}$ order nonlinear quadratic regulator (NLQR) \cite{AlmubarakSadeghandTaylor2019}, i.e. the approximated value function for the optimal control problem is of order four. Clearly, the proposed technique is powerful enough to generate a safe and asymptotically stable closed loop system safety is embedded in the stability of the overall system. It can be seen that there is a small cusp when the trajectories cross the $90^{\text{o}}$ angle which is a result of the lose of controllability when $\cos(90^{\text{o}})=0$. If low velocities were allowed at that specific region, that is if no barrier states are used, the closed loop system will go unstable. 

\begin{figure}
        \centering
    \includegraphics[trim=0 20 0 0, width=2.6in]{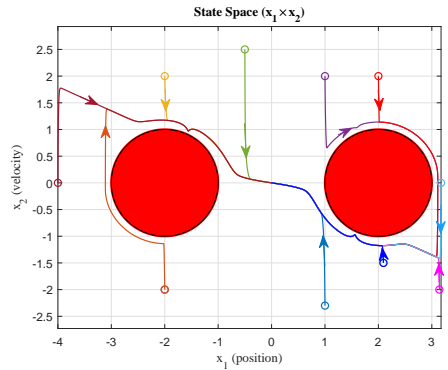}
      \caption{Numerical simulations of the closed-loop safety-critical system starting from different initial conditions (small circles). The goal is to stabilize the pendulum and avoid crossing the $90^{\text{o}}$ angle with low velocities, where the unsafe region is represented by the red circles. The proposed technique successfully generates safe trajectories and safely stabilizes the origin.}
      \label{pendolum simulations}
   \end{figure}

\subsection{Multi Simple Mobile Robot Collision Avoidance} 
In this example, two simple mobile robots are asked to navigate their way toward prespecified targets. The robots are to avoid colliding as well as avoid an obstacle on their way. The robots dynamics are
\begin{equation*} \begin{split}
    \dot{x}_i= \begin{bmatrix} u_{i1} \\ u_{i2}\end{bmatrix}, \dot{x}_j= \begin{bmatrix} u_{j1} \\ u_{j2}\end{bmatrix}
\end{split}  \end{equation*}
To avoid collision, a BaS is featured through a barrier function that prevents the robots from getting too close to each other. We pick the maximum distance between the two robots to be $\delta=0.1$. Therefore, the associated safe set is $\{x_i,x_j \in \mathcal{D}\ | \ ||x_{i}-x_{j}||^2 > \delta^2 \}$. Furthermore, we add an obstacle which is represented by a circle at $(0,0)$ with a radius of $0.25$. This calls for a BaS for each agent. Hence, the overall safe set for each agent is given by $\mathcal{C}=\{x_k\in \mathcal{D}\ | \ ||x_{i}-x_{j}||^2 > \delta \; \text{and} \; x_{k1}^2 + x_{k2}^2>0.25^2, \;  k=i,j \}$. 
For this example, we select the Log BF, $\beta_l=\log(\frac{1+h_l}{h_l})$ for $l=1,2,3$, to construct three barrier states, where the first represents the distance constraint between the two robots and the second and third barrier states represent the barriers needed to avoid the obstacle for agent $i$ and agent $j$ respectively. 
Using Table \ref{tab:phi}, the barrier states are given by
\begin{equation*}
     \dot{z}_l= -\gamma_l \big( h_l (e^{z_{cl}}-1)^2 - e^{z_{cl}} +1 \big) - 4 \sinh^2(z_{cl}/2)h_{lx} u_l
\end{equation*}
with $z_l(0)=\beta_l(x(0))-\beta_l(0)$ for $l=1,2,3$ where $\gamma_1=15,\gamma_2=\gamma_3=0.5, z_{cl}=(z_l+c_l), c_l=\log(\frac{1+h_l(0)}{h_l(0)})$, $u_1=[u_i^{\rm{T}},u_j^{\rm{T}}]^{\rm{T}}$, $u_2=u_i$ and $u_3=u_j$. The cost functional \eqref{cost.func} is selected such that $R=I_{4}$ and $Q=x_i^{\rm{T}}x_i + x_j^{\rm{T}}x_j + 0.001z_1^2 + 0.5 z_2^2 + 0.5z_3^2$.
A $3^{\text{rd}}$ order NLQR is used in this example. As shown in Fig. \ref{Robots with obstacle}, using the proposed technique, we are effectively able to send the two robots safely to the targeted positions while avoiding the obstacle as well as colliding with each other. The robots get very close to each other at sometime but never get too close, i.e. the distance between them is never less than or equal to $\delta$. 

 \begin{figure}
        \centering
     \subfloat{\includegraphics[trim=0 0 0 0, clip, width=0.75\linewidth]{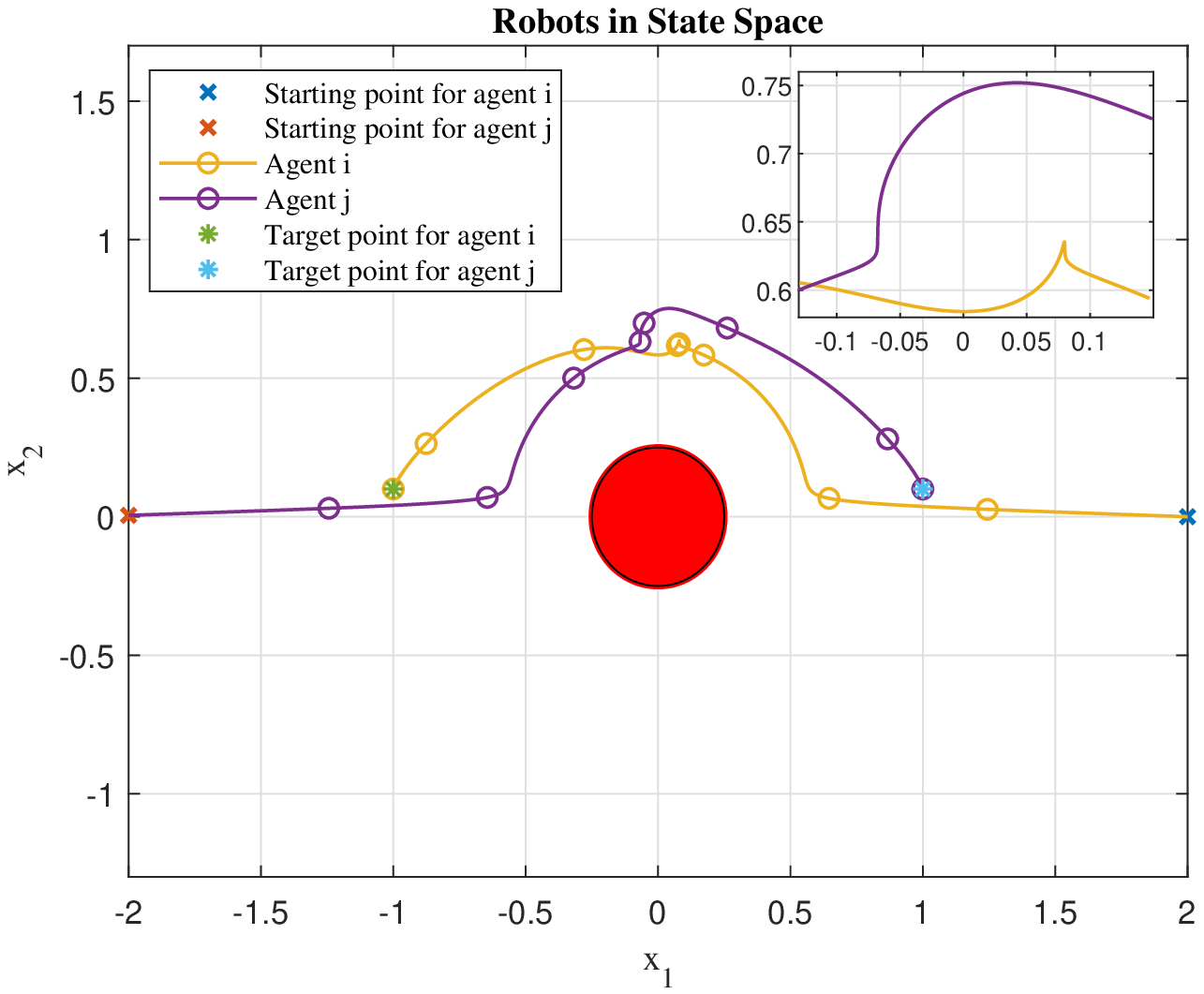}}
     \qquad
    \subfloat{\includegraphics[trim=0 5 0 0, clip, width=0.75\linewidth]{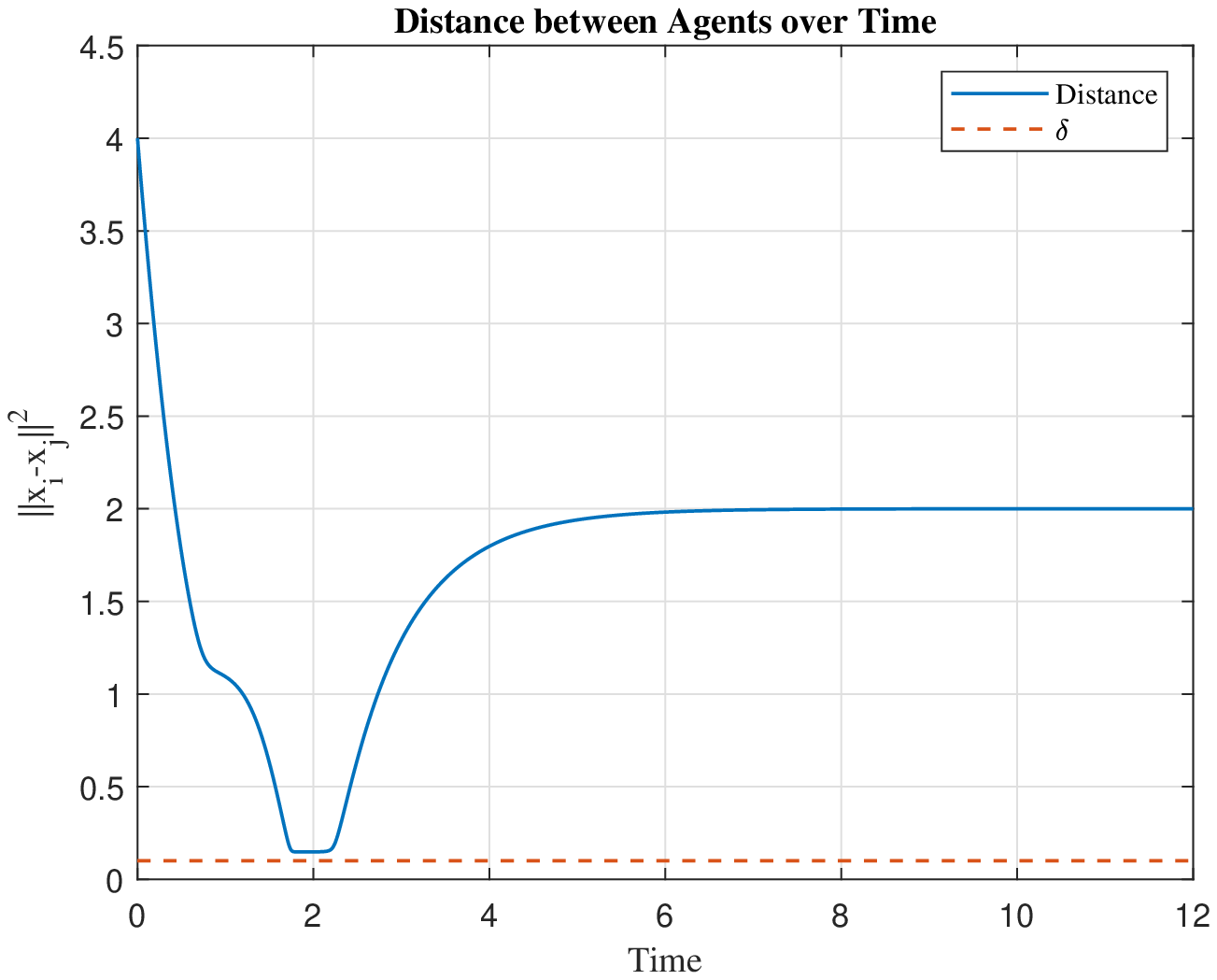}}      \caption{Two mobile robots starting from two opposite points sent to two opposite target points with an obstacle in-between. The two robots arrive at the desired locations safely. It can be seen that the two robots start by moving in a straight line toward their target points, which is the optimal behavior, but start to steer away to avoid the obstacle and then take some turns to avoid crossing the same point at the same time and to avoid getting too close (less than $\delta$ distance) to each other as shown in the bottom figure.}
      \label{Robots with obstacle}
   \end{figure} 

\section{Conclusions and Future Works} \label{Sec: CONCLUSIONS}
A novel construction of safety embedded controls through the development of barrier states was presented. Through a proper conversion of barrier functions, barrier states were augmented to generate a nominal model which is safe if is asymptotically stable. Using barrier states, the constrained control problem was transformed to an unconstrained control problem, which makes the safety problem easier to be considered with various control techniques such as optimal control. Moreover, the BaS method is agnostic to the relative degree of the function describing the safe set with respect to the system unlike existing CBF based methods. Furthermore, there is no need to relax the stability requirement to guarantee safety as the two conditions are coupled and achieved simultaneously. The disadvantages of this approach include increasing the dimension of the model and adding nonlinearity to the model. The safety embedded model was used in constrained linear control and in the context of optimal control to generate safe stabilizing controllers. Simple linear controls and nonlinear quadratic controls were used to show the efficacy of the proposed method in various simulation examples. 
 
Future work will include generalizations to nonlinear stochastic systems and applications to infinite and finite horizon stochastic optimal control as well as sampling-based model predictive control formulations. Another line of research includes generalizations of the proposed optimal control framework to min-max and  $H_{\infty}$ optimal control problem formulations. Finally, incorporating uncertainty quantification methods into the augmented state space representation in \eqref{new system, safety augmented} is an active research.  









\printbibliography
\end{document}